\newcommand{\techRep}{true} 
\newcommand{\iftechrep}{\ifthenelse{\equal{\techRep}{true}}}
\newenvironment{customlegend}[1][]{%
    \begingroup
    \pgfplots@init@cleared@structures
    \pgfplotsset{#1}%
}{%
    \pgfplots@createlegend
    \endgroup
}%
\def\addlegendimage{\pgfplots@addlegendimage}
\begin{document}

\title{Approaching the Coverability Problem Continuously}
\author{Michael Blondin\inst{1,2}\thanks{Supported by the Fonds
    qu\'{e}b\'{e}cois de la recherche sur la nature et les
    technologies (FQRNT), by the French Centre national de la
    recherche scientifique (CNRS), and by the ``Chaire Digiteo, ENS
    Cachan --- {\'E}cole Polytechnique''.} \and Alain Finkel\inst{2}
  \and Christoph Haase\inst{2}\thanks{Supported by Labex Digicosme,
    Univ. Paris-Saclay, project VERICONISS.} \and Serge
  Haddad\inst{2,3}\thanks{Supported by ERC project EQualIS (FP7-308087).}} \institute{DIRO, Universit\'{e} de Montr\'{e}al,
  Canada \and LSV, CNRS \& ENS Cachan, Universit\'e Paris-Saclay,
  France \and Inria, France}

\maketitle
\begin{abstract}
  The coverability problem for Petri nets plays a central role in the
  verification of concurrent shared-memory programs. However, its high
  EXPSPACE-complete complexity poses a challenge when encountered in
  real-world instances. In this paper, we develop a new approach to
  this problem which is primarily based on applying forward
  coverability in continuous Petri nets as a pruning criterion inside
  a backward-coverability framework. A cornerstone of our approach is
  the efficient encoding of a recently developed polynomial-time
  algorithm for reachability in continuous Petri nets into SMT. We
  demonstrate the effectiveness of our approach on standard benchmarks
  from the literature, which shows that our approach decides
  significantly more instances than any existing tool and is in
  addition often much faster, in particular on large instances.
\end{abstract}

\section{Introduction}

Counter machines and Petri nets are popular mathematical models for
modeling and reasoning about distributed and concurrent systems. They
provide a high level of abstraction that allows for employing them in
a great variety of application domains, ranging, for instance, from
modeling of biological, chemical and business processes to the formal
verification of concurrent programs.

Many safety properties of real-world systems reduce to the
\emph{coverability problem} in Petri nets: Given an initial and a
target configuration, does there exist a sequence of transitions
leading from the initial configuration to a configuration larger than
the target configuration?
%
%
For instance, in an approach pioneered by German and
Sistla~\cite{GS92} multi-threaded non-recursive finite-state programs
with shared variables, which naturally occur in
predicate-abstraction-based verification frameworks, are modeled as
Petri nets such that every program location corresponds to a place in
a Petri net, and the number of tokens of a place indicates how many
threads are currently at the corresponding program
location. Coverability can then, for instance, be used in order to
detect whether a mutual exclusion property could be violated when a
potentially unbounded number of threads is executed in parallel.
The coverability problem was one of the first decision problems for
Petri nets that was shown decidable
and \EXPSPACE-complete~\cite{KM67,CLM76,Rack78}. Despite this huge
worst-case complexity, over the course of the last twenty years, a
plethora of tools has emerged that have shown to be able to cope with
a large number of real-world instances of coverability problems in a
satisfactory manner.

\subsubsection{Our contribution.}
We present a new approach to the coverability problem and its
implementation. When run on standard benchmarks that we obtained from
the literature, our approach proves more than 91\% of safe instances
to be safe, most of the time much faster when compared to existing
tools, and none of those tools can individually prove more than 84\%
of safe instances to be safe. We additionally demonstrate that our
approach is also competitive when run on unsafe instances. In
particular, it decides 142 out of 176 (80\%) instances of our
benchmark suite, while the best competitor only decides 122 (69\%)
instances.

Our approach is conceptually extremely simple and exploits recent
advances in the theory of Petri nets as well as the power of modern
SMT-solvers inside a backward-coverability framework. In~\cite{FH15},
Fraca and Haddad solved long-standing open problems about the
complexity of decision problems for so-called continuous Petri nets.
This class was introduced by David and Alla~\cite{DA87} and allows for
transitions to be fired a non-negative real number of times---hence
places may contain a non-negative real number of tokens. The
contribution of~\cite{FH15} was to present polynomial-time algorithms
that decide all of coverability, reachability and boundedness in this
class. A further benefit of~\cite{FH15} is to show that continuous
Petri nets over the reals are equivalent to continuous Petri nets over
the rationals, and, moreover, to establish a set of simple sufficient
and necessary conditions in order to decide reachability in continuous
Petri nets. The first contribution of our paper is to show that these
conditions can efficiently be encoded into a sentence of \emph{linear
  size} in the existential theory of the non-negative rational numbers
with addition and order (FO($\Qpos, +, >$)). This encoding paves the
way for deciding coverability in continuous Petri nets inside
SMT-solvers and is particularly useful in order to efficiently answer
\emph{multiple coverability queries} on the same continuous Petri net
due to caching strategies present in modern SMT-solvers. Moreover, we
show that our encoding in effect \emph{strictly subsumes} a recently
introduced CEGAR-based approach to coverability described by Esparza
\emph{et al.} in~\cite{ELMMN14}; in particular we can completely avoid
the potentially exponentially long CEGAR-loop, cf.\ the related work
section below. The benefit of coverability in continuous Petri nets is
that it provides a way to over-approximate coverability under the
standard semantics: any configuration that is not coverable in a
continuous Petri net is also not coverable under the standard
semantics. This observation can be exploited inside a
backward-coverability framework as follows. Starting at the target
configuration to be covered, the classical backward-coverability
algorithm~\cite{ACJT00}
repeatedly computes the set of all minimal predecessor configurations
that by application of one transition cover the target or some earlier
computed configuration until a fixed point is reached, which is
guaranteed to happen due to Petri nets being well-structured
transition systems~\cite{FS01}. The crux to the performance of the
algorithm lies in the size of the set of minimal elements that is
computed during each iteration, which may grow
exponentially.\footnote{This problem is commonly referred to as the
  \emph{symbolic state explosion problem}, cf.~\cite{DRB04}.} This is
where continuous coverability becomes beneficial. In our approach, if
a minimal element is not continuously coverable, it can safely be
discarded since none of its predecessors is going to be coverable
either, which substantially shrinks the predecessor set.
In effect, this heuristic yields a powerful pruning technique,
enabling us to achieve the aforementioned advantages when compared to
other approaches on standard benchmarks. 

\iftechrep{Some proof details are only sketched in the main part of
  this paper; full details can be found in the appendix.}{Due to
  space constraints, we only sketch some of the proofs in this paper.
Full details can be found in~\cite{BFHH15}.}

\subsubsection{Related Work.}
Our approach is primarily related to the work by Esparza \emph{et
  al.}~\cite{ELMMN14}, by Kaiser, Kroening and Wahl~\cite{KKW14}, and
by Delzanno, Raskin and van Begin~\cite{DRB01}. In~\cite{ELMMN14},
Esparza \emph{et al.} presented an implementation of a semi-decision
procedure for disproving coverability which was originally proposed by
Esparza and Melzer~\cite{EM00}. It is based on the Petri-net state
equation and traps as sufficient criteria in order to witness
non-coverability. As shown in~\cite{EM00}, those conditions can be
encoded into an equi-satisfiable system of linear inequalities called
the \emph{trap inequation} in~\cite{EM00}. This approach is, however,
prone to numerical imprecision that become problematic even for
instances of small size~\cite[Sec.~5.3]{EM00}. For that reason, the
authors of~\cite{ELMMN14} resort to a CEGAR-based variant of the
approach described in~\cite{EM00} which has the drawback that in the
worst case, the CEGAR loop has to be executed an exponential number of
times leading to an exponential number of queries to the underlying
SMT-solver. We will show in Section~\ref{ssec:esparza} that the
conditions used in~\cite{ELMMN14} are strictly subsumed by a subset of
the conditions required to witness coverability in continuous Petri
nets: whenever the procedure described in~\cite{ELMMN14} returns
uncoverable then coverability does not hold in the continuous setting
either, but not \emph{vice versa}. Thus, a single satisfiability check
to our formula in existential FO($\Qpos, +, >$) encoding continuous
coverability that we develop in this paper completely subsumes the
CEGAR-approach presented in~\cite{ELMMN14}. Another difference
to~\cite{ELMMN14} is that here we present a sound \emph{and} complete
decision procedure.

Regarding the relationship of our work to~\cite{KKW14}, Kaiser
\emph{et al.} develop in their paper an approach to coverability in
richer classes of well-structured transition systems that is also
based on the classical backward-analysis algorithm. They also employ
forward analysis in order to prune the set of minimal elements during
the backward iteration, and in addition a widening heuristic in order
to over-approximate the minimal basis. Our approach differs in that
our minimal basis is always precise yet as small as possible modulo
continuous coverability. Thus no backtracking as in~\cite{KKW14} is
needed, which is required when the widened basis turns out to be too
inaccurate. Another difference is that for the forward analysis, a
Karp-Miller tree is incrementally built in the approach described
in~\cite{KKW14}, whereas we use the continuous coverability
over-approximation of coverability.

The idea of using an over-approximation of the reachability set of a
Petri net in order to prune minimal basis elements inside a backward
coverability framework was first described by Delzanno \emph{et
  al.}~\cite{DRB01}, who use place invariants as a pruning
criterion. However, computing such invariants and checking if a
minimal basis element can be pruned potentially requires exponential
time.

Finally, a number of further techniques and tools for deciding Petri
net coverability or more general well-structured transition systems
have been described in the literature. They are, for instance, based
on efficient data structures~\cite{Gan02,FRSB02,DRB04,GMDKRB07} and
generic algorithmic frameworks such as EEC~\cite{GRB06} and
IC3~\cite{KMNP13}.




\section{Preliminaries}\label{sec:preliminaries}

We denote by $\Q$, $\Z$ and $\N$ the set of rationals, integers, and
natural numbers, respectively, and by $\Qpos$ the set of non-negative
rationals. Throughout the whole paper, numbers are encoded in binary,
and rational numbers as pairs of integers encoded in binary. Let $\D
\subseteq \Q$, $\D^E$ denotes the set of vectors indexed by a finite
set $E$.  A vector $\vec{u}$ is denoted by $\vec{u}=(u_i)_{i\in E}$.
Given vectors $\vec{u} = (u_i)_{i\in E}, \vec{v} = (v_i)_{i\in E}\in
\D^E$, addition $\vec{u} + \vec{v}$ is defined component-wise, and
$\vec{u} \le \vec{v}$ whenever $u_i \le v_i$ for all $i\in
E$. Moreover, $\vec{u} < \vec{v}$ whenever $\vec{u} \le \vec{v}$ and
$\vec{u} \neq \vec{v}$.  Let $E' \subseteq E$ and $\vec{v}\in \D^{E}$,
we sometimes write $\vec{v}[E']$ as an abbreviation for $(v_i)_{i \in
  E'}$. The \emph{support of $\vec{v}$} is the set $\eval{\vec{v}}
\defeq \{ i \in E : \vec{v}_i \neq 0 \}$.

Given finite sets of indices $E$ and $F$, and $\D \subseteq \Q$,
$\D^{E\times F}$ denotes the set of matrices over $\D$ with rows and
columns indexed by elements from $E$ and $F$, respectively. Let
$\mat{M}\in \D^{E\times F}$, $E'\subseteq E$ and $F' \subseteq F$, we
denote by $\mat{M}_{E'\times F'}$ the $\D^{E'\times F'}$ sub-matrix
obtained from $\mat{M}$ whose row and columns indices are restricted
respectively to $E'$ and $F'$. 

\subsubsection{Petri Nets.}

In what follows, we introduce the syntax and semantics of Petri
nets. While we provide a single syntax for nets, we introduce a
discrete (\ie{} in $\N$) and a continuous (\ie{} in $\Qpos$) semantics.

\begin{definition}
  A Petri net is a tuple $\net = (P, T, \Pre, \Post)$, where $P$ is a
  finite set of \emph{places}; $T$ is a finite set of
  \emph{transitions} with $P\cap T = \emptyset$; and $\Pre, \Post \in
  \N^{P\times T}$ are the \emph{backward} and \emph{forward incidence
    matrices}, respectively.
\end{definition}
A (discrete) \emph{marking} of $\net$ is a vector of $\N^P$. A
\emph{Petri net system (PNS)} is a pair $\pns = (\net, \vec{m}_0)$,
where $\net$ is a Petri net and $\vec{m}_0 \in \N^P$ is the
\emph{initial marking}.  The \emph{incidence matrix} $\mat{C}$ of
$\net$ is the $P \times T$ integer matrix defined by $\mat{C} \defeq
\Post - \Pre$. The \emph{reverse net} of $\net$ is $\net^{-1} \defeq
(P, T, \Post, \Pre)$. Let $p \in P$ and $t \in T$, the \emph{pre-sets}
of $p$ and $t$ are the sets $\pre{p} \defeq \{t' \in T : \Post(p,t') >
0\}$ and $\pre{t}\defeq \{ p' \in P : \Pre(p',t) > 0\}$,
respectively. Likewise, the \emph{post-sets} of $p$ and $t$ are
$\post{p} \defeq \{t' \in T : \Pre(p, t') > 0\}$ and $\post{t} = \{p'
\in P : \Post(p', t) > 0\}$, respectively. Those definitions can
canonically be lifted to subsets of places and of transitions, \eg,
for $Q\subseteq P$ we have $\pre{Q} = \bigcup_{p\in Q} \pre{p}$.  We
also introduce the \emph{neighbors} of a subset of places/transitions
by: $\prepost{Q}=\pre{Q}\cup \post{Q}$. Let $S\subseteq T$, then
$\net_{S}$ is the sub-net defined by $\net_{S} \defeq (\prepost{S}, S,
\Pre_{\prepost{S}\times S}, \Post_{\prepost{S}\times S})$.

We say that a transition $t \in T$ is \emph{enabled} at a marking
$\vec{m}$ whenever $\vec{m}(p) \ge \Pre(p, t)$ for every $p \in
\pre{t}$. A transition $t$ that is enabled can be \emph{fired},
leading to a new marking $\vec{m}'$ such that for all places $p \in
P$, $\vec{m}'(p) = \vec{m}(p) + \mat{C}(p, t)$. We write $\vec{m}
\xrightarrow{t} \vec{m}'$ whenever $t$ is enabled at $\vec{m}$ leading
to $\vec{m}'$, and write $\vec{m} \xrightarrow{} \vec{m}'$ if $\vec{m}
\xrightarrow{t} \vec{m}'$ for some $t \in T$. By $\xrightarrow{}^*$ we
denote the reflexive transitive closure of $\xrightarrow{}$. A word
$\sigma = t_1 t_2 \cdots t_k \in T^*$ is a \emph{firing sequence of
  $(\net, \vec{m}_0)$} whenever there exist markings $\vec{m}_1,
\ldots, \vec{m}_k$ such that
\[
\vec{m}_0 \xrightarrow{t_1} \vec{m}_1 \xrightarrow{t_2} \cdots
\xrightarrow{t_{k-1}} \vec{m}_{k-1} \xrightarrow{t_k} \vec{m}_k.
\]
Given a marking $\vec{m}$, the \emph{reachability} problem asks
whether $\vec{m}_0 \xrightarrow{}^* \vec{m}$. The reachability problem
is decidable,
\EXPSPACE-hard~\cite{CLM76} and in ${\bf
  F}_{\omega^3}$~\cite{LerouxS15}, a non-primitive-recursive
complexity class. In this paper, however, we are interested in
deciding coverability, an
\EXPSPACE-complete problem~\cite{CLM76,Rack78}.

\begin{definition}
  Given a Petri net system $\pns = (P, T, \Pre, \Post, \vec{m}_0)$ and
  a marking $\vec{m} \in \N^P$, the \emph{coverability} problem asks
  whether $\vec{m}_0 \xrightarrow{}^* \vec{m}'$ for some $\vec{m}'\ge
  \vec{m}$.
\end{definition}
%

\emph{Continuous Petri nets} are Petri nets in which markings may
consist of rational numbers\footnote{In fact, the original definition
  allows for real numbers, however for studying decidability and
  complexity issues, rational numbers are more convenient.}, and in
which transitions may be fired a fractional number of times. Formally,
a marking of a continuous Petri net is a vector $\vec{m}\in
\Qpos^P$. Let $t \in T$, the \emph{enabling degree} of $t$ with
respect to $\vec{m}$ is a function $\enab{t}{\vec{m}} \in \Qpos \cup
\{\infty \}$ defined by:
\[
\enab{t}{\vec{m}} \defeq 
\begin{cases}
  \min \{\vec{m}(p) / \Pre(p,t) : p \in \pre t\} & 
  \text{if } \pre{t} \neq \emptyset, \\
  \infty & \text{otherwise.}
\end{cases}
\]
We say that $t$ is \emph{$\Q$-enabled} at $\vec{m}$ if
$\enab{t}{\vec{m}} > 0$. If $t$ is $\Q$-enabled it may be \emph{fired}
by any amount $q \in \Qpos$ such that $0 \le q \le \enab{t}{\vec{m}}$,
leading to a new marking $\vec{m}'$ such that for all places $p \in
P$, $\vec{m}(p)' \defeq \vec{m}(p) + q \cdot \mat{C}(p,t)$. In this
case, we write $\vec{m} \xrightarrow{q \cdot t} \vec{m}'$. The
definition of a \emph{$\Q$-firing sequence} $\sigma = q_1t_1 \cdots
q_kt_k \in (\Qpos \times T)^*$ is analogous to the standard definition
of firing sequence, and so are $\xrightarrow{}_\Q$,
$\xrightarrow{}^{*}_{\Q}$ and $\Q$-reachability. The \emph{$\Q$-Parikh
  image} of the firing sequence $\sigma$ is the vector
$\parikh(\sigma)\in \Qpos^T$ such that $\parikh(\sigma)(t) \defeq
\sum_{t_i=t} q_i$.  We also adapt the decision problems for Petri
nets.

\begin{definition}
  Given a Petri net system $\pns = (P, T, \Pre, \Post, \vec{m}_0)$ and
  a marking $\vec{m} \in \Qpos^P$, the $\Q$-reachability (respectively
  $\Q$-coverability) problem asks whether $\vec{m}_0
  \xrightarrow{}_\Q^* \vec{m}$ (respectively $\vec{m}_0
  \xrightarrow{}_\Q^* \vec{m}'$ for some $\vec{m}'\ge \vec{m}$).
\end{definition}

Recently $\Q$-reachability and $\Q$-coverability were shown to be
decidable in polynomial time~\cite{FH15}.  In
Section~\ref{ssec:q-coverability}, we will discuss in detail the
approach from~\cite{FH15}. For now, observe that $\vec{m}
\xrightarrow{} \vec{m}'$ implies $\vec{m} \xrightarrow{}_\Q \vec{m}'$,
and hence $\vec{m} \xrightarrow{}^* \vec{m}'$ implies $\vec{m}
\xrightarrow{}_\Q^* \vec{m}'$. Consequently, $\Q$-coverability
provides an over-approximation of coverability: this fact is the
cornerstone of this paper.

\subsubsection{Upward Closed Sets.}\label{ssec:ucs}

A set $V \subseteq \N^P$ is \emph{upward closed} if for every $\vec{v}
\in V$ and $\vec{w} \in \N^P$, $\vec{v} \le \vec{w}$ implies $\vec{w}
\in V$. The \emph{upward closure} of a vector $\vec{v} \in \N^P$ is
the set $\uwclosure \vec{v} \defeq \{\vec{w}\in \N^P : \vec{v} \le
\vec{w}\}$. This definition can be lifted to sets $V \subseteq \N^P$
in the obvious way, \ie, $\uwclosure V \defeq \bigcup_{\vec{v} \in V}
\uwclosure \vec{v}$. Due to $\N^P$ being well-quasi-ordered by $\le$,
any upward-closed set $V$ contains a finite set $F \subseteq V$ such
that $V = \uwclosure F$. Such an $F$ is called a \emph{basis} of $V$
and allows for a finite representation of an upward-closed set. In
particular, it can be shown that $V$ contains a unique \emph{minimal
  basis} $B \subseteq V$ that is minimal with respect to inclusion for
all bases $F \subseteq V$. We denote $\Min(F)$ this minimal basis
obtained by deleting vectors $\vec{v}\in F$ such that there exists
$\vec{w}\in F$ with $\vec{w}<\vec{v}$ (when $F$ is finite).


\section{Deciding Coverability and $\Q$-Reachability}\label{sec:existing-alg}

We now introduce and discuss existing algorithms for solving
coverability and $\Q$-reachability which form the basis of our
approach. The main reason for doing so is that it allows us to
smoothly introduce some additional notations and concepts that we
require in the next section. For the remainder of this section, we fix
some Petri net system $\pns = (\net, \vec{m}_0)$ with $\net = (P, T,
\Pre, \Post)$, and some marking $\vec{m}$ to be covered or
$\Q$-reached.

\subsection{The Backward Coverability Algorithm}

The standard backward coverability algorithm,
Algorithm~\ref{alg:backward}, is a simple to state algorithm.
\begin{itemize}
 \item It iteratively constructs minimal bases $M$, where in the
   $k$-th iteration, $M$ is the minimal basis of the (upward closed)
   set of markings that can cover $\vec{m}$ after a firing sequence of
   length at most $k$. If $\vec{m}_0 \in \uwclosure M$, the algorithm
   returns true, i.e., that $\vec{m}$ is coverable. Otherwise, in
   order to update $M$, for all $\vec{m}' \in M$ and $t \in T$ it
   computes $\vec{m}'_t(p) \defeq \max \{\Pre(p, t),\ \vec{m}'(p) -
   \mat C(p, t)\}$.  The singleton $\{\vec{m}'_t\}$ is the minimal
   basis of the set of vectors that can cover $\vec{m}'$ after firing
   $t$.
\item Thus defining $pb(M)$ as $\pb{M} \defeq \bigcup_{\vec{m'} \in M,
  t \in T} \{\vec{m}_t'\}$, $M\cup pb(M)$ is a (not necessarily minimal)
  basis of the upward closed set of markings that can cover $\vec{m}$
  after a firing sequence of length at most $k+1$. This basis can be
  then minimized in every iteration.
\end{itemize}

%

\begin{algorithm}[t]
  \begin{algorithmic}[1]
    \REQUIRE PNS $\pns=(\net, \vec{m}_0)$  and a marking $\vec{m}\in \N^P$
    \STATE $M := \{ \vec{m} \}$;
    \WHILE{$\vec{m}_0 \not \in \uwclosure M$}
    \STATE $B := \pb{M} \setminus \uwclosure M$;
    \IF {$B = \emptyset$}
    \RETURN $\mathit{false}$;
    \ELSE
    \STATE $M := \Min(M \cup B)$;
    \ENDIF
    \ENDWHILE
    \RETURN $\mathit{true}$;
  \end{algorithmic}
  \caption{Backward Coverability}
  \label{alg:backward}
\end{algorithm}

The termination of the algorithm is guaranteed due to $\N^P$ being
well-quasi-ordered, which entails that $M$ must stabilize and return
false in this case. It can be shown that Algorithm~\ref{alg:backward}
runs in \ComplexityFont{2}-\EXP~\cite{BG11}. The key point to the
(empirical) performance of the algorithm is the size of the set $M$
during its computation: the smaller, the better. Even though one can
establish a doubly-exponential lower bound on the cardinality of $M$
during the execution of the algorithm, in general not every element in
$M$ is coverable, even when $\vec{m}$ is coverable.

\subsection{The $\Q$-Reachability Algorithm}\label{ssec:q-coverability}

We now present the fundamental concepts of the polynomial-time
$\Q$-reachability algorithm of Fraca and Haddad~\cite{FH15}. The key
insight underlying their algorithm is that $\Q$-reachability can be
characterized in terms of three simple criteria. The algorithm relies
on the notions of \emph{firing set} and \emph{maximal firing set},
denoted $\fs(\net, \vec{m})$ and $\maxfs{\net}{\vec{m}}$, and defined
as follows:
\begin{align*}
  \fs(\net,\vec{m})\ & \defeq\ \{\eval{\parikh(\sigma)} : \sigma \in
  (\Qpos \times T)^*,\ \text{there is } \vec{m}' \in \Qpos^P \text{ s.t.\ } 
  \vec{m} \xrightarrow{\sigma}_\Q \vec{m}'\} \\
  \maxfs{\net}{\vec{m}}\ & \defeq\ \bigcup_{T'\in \fs(\net, \vec{m})} T'.
\end{align*}
Thus, $\fs(\net, \vec{m})$ is the set of supports of firing sequences
starting in $\vec{m}$. Even though $\fs(\net, \vec{m})$ can be of size
exponential with respect to $\abs{T}$, deciding $T' \in \fs(\net,
\vec{m})$ for some $T' \subseteq T$ can be done in polynomial time,
and $\maxfs{\net}{\vec{m}}$ is also computable in polynomial
time~\cite{FH15}. The following proposition characterizes the set of
$\Q$-reachable markings.
\begin{proposition}[{\cite[Thm.\ 20]{FH15}}]\label{prop:q-reachability}
  A marking $\vec{m}$ is $\Q$-reachable in $\pns = (\net, \vec{m}_0)$
  if and only if there exists $\vec{x} \in \Qpos^T$ such that
  \begin{enumerate}[(i)]
  \item $\vec{m} = \vec{m}_0 + \mat{C} \cdot \vec{x}$
  \item $\eval{\vec{x}} \in \fs(\net, \vec{m}_0)$
  \item $\eval{\vec{x}} \in \fs(\net^{-1}, \vec{m})$
  \end{enumerate}
\end{proposition}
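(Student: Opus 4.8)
The plan is to prove the two directions separately. Necessity is essentially bookkeeping plus a reversal argument, whereas sufficiency requires an honest construction of a $\Q$-firing sequence, and that is where the effort lies.

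For necessity, suppose $\vec{m}$ is $\Q$-reachable, so $\vec{m}_0 \xrightarrow{\sigma}_\Q \vec{m}$ for some $\Q$-firing sequence $\sigma$, and set $\vec{x} \defeq \parikh(\sigma)$. Condition (i) is immediate, since each elementary firing $q_i\cdot t_i$ changes the marking by $q_i \cdot \mat{C}(\cdot,t_i)$ and summing over $\sigma$ yields the net effect $\mat{C}\cdot\vec{x}$. Condition (ii) is immediate from the definition of $\fs$, because $\sigma$ itself witnesses $\eval{\vec{x}} = \eval{\parikh(\sigma)} \in \fs(\net,\vec{m}_0)$. For (iii) I would establish a \emph{reversal lemma}: every step $\vec{a} \xrightarrow{q t}_\Q \vec{b}$ of $\net$ is mirrored by a step $\vec{b} \xrightarrow{q t}_\Q \vec{a}$ of $\net^{-1}$. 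Since the incidence matrix of $\net^{-1}$ is $-\mat{C}$, the arithmetic $\vec{b} - q\mat{C}(\cdot,t) = \vec{a}$ holds automatically, and the enabledness requirement in $\net^{-1}$ (strict positivity of $\vec{b}$ on $\post{t}$ together with $q \le \vec{b}(p)/\Post(p,t)$ there) reduces, after substituting $\vec{b}(p) = \vec{a}(p) + q(\Post(p,t)-\Pre(p,t))$, to the inequality $\vec{a}(p) \ge q\,\Pre(p,t)$, which held precisely because $t$ was $\Q$-enabled at $\vec{a}$ in $\net$. Applying this step by step to $\sigma$ read backwards produces a firing sequence of $\net^{-1}$ from $\vec{m}$ with the same support $\eval{\vec{x}}$, giving (iii).

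For sufficiency I would fix $\vec{x}$ satisfying (i)--(iii), write $T' \defeq \eval{\vec{x}}$, and assemble a single run from $\vec{m}_0$ to $\vec{m}$ using two structural facts about continuous nets. First, \emph{homogeneity}: firing amounts scale, so any witnessing sequence can be shrunk by a factor $\lambda \in (0,1]$, keeping all visited markings within an arbitrarily small neighbourhood of its starting marking. Second, a \emph{saturation lemma}: if $T' \in \fs(\net,\vec{a})$, then there is a sequence with support $T'$ from $\vec{a}$ reaching a marking strictly positive on $\prepost{T'}$, the intuition being that firing each transition of $T'$ a little deposits tokens on all its neighbours. The construction has three phases. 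Using (ii) and saturation, a short forward sequence carries $\vec{m}_0$ to a marking $\vec{m}_1$ positive on $\prepost{T'}$. Using (iii), the reversal lemma, and saturation applied in $\net^{-1}$, a short sequence exhibits a marking $\vec{m}_2$, positive on $\prepost{T'}$, with $\vec{m}_2 \xrightarrow{}_\Q^* \vec{m}$ in $\net$ along transitions of $T'$. Finally (i) certifies $\vec{m} - \vec{m}_0 = \mat{C}\cdot\vec{x}$, and a middle phase fires the bulk of $\vec{x}$ to pass from $\vec{m}_1$ to $\vec{m}_2$; since both endpoints are strictly positive on every relevant place, each small increment supported on $T'$ can always be fired, so chopping $\vec{x}$ into sufficiently many tiny pieces realises the whole vector while intermediate markings stay nonnegative.

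The main obstacle is this middle phase: arranging the three phases so that their combined Parikh image is exactly $\vec{x}$ (as forced by (i)) while never violating nonnegativity nor the strict positivity needed for enabledness. Strict positivity on $\prepost{T'}$ only guarantees that a \emph{small} amount of each transition of $T'$ is fireable, so a naive attempt to fire all of $\vec{x}$ at once may drive some place negative; the remedy is to interleave the forward and backward saturating sequences with scaled-down copies of $\vec{x}$ and to exploit convexity of the set of $\Q$-reachable markings to interpolate between $\vec{m}_1$ and $\vec{m}_2$. Proving that such an interpolation is realisable as a genuine $\Q$-firing sequence, rather than merely as another solution of the state equation, is where the continuous semantics is essential and where I expect the bulk of the technical work to concentrate.
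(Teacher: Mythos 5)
The paper contains no proof of this proposition to compare against: it is imported verbatim from Fraca and Haddad \cite{FH15} (Thm.~20), and the present paper only uses it as a black box. Measured against the proof in \cite{FH15}, your sketch is correct and follows essentially the same route: Parikh image plus a reversal argument for necessity, and for sufficiency the same three-phase decomposition into a scaled-down forward witness of (ii), a scaled-down reversed witness of (iii), and an interior segment realising the remaining Parikh vector. Your reversal lemma is right as stated (one pedantic caveat: first discard zero-amount firings, which do not affect $\parikh(\sigma)$, since a transition fired with amount $0$ in $\net$ need not even be $\Q$-enabled in $\net^{-1}$ at the resulting marking); your saturation lemma holds provided one always fires strictly less than the enabling degree, so that every place, once positive, remains positive, outputs of fired transitions become positive, and inputs of fired transitions are positive at firing time, giving positivity on all of $\prepost{T'}$. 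The one place where you overestimate the difficulty is the middle phase: no interleaving of the saturating sequences with copies of $\vec{x}$ is needed, and the convexity of the $\Q$-reachability set need not be cited as an external fact. Scale phases 1 and 3 so their Parikh images are each at most $\vec{x}/2$ componentwise, set $\vec{x}' \defeq \vec{x}$ minus those two Parikh images, so that $\vec{x}' \ge 0$, $\eval{\vec{x}'}\subseteq T'$, and $\vec{m}_2 = \vec{m}_1 + \mat{C}\cdot\vec{x}'$ by (i); then split $\vec{x}'$ into $n$ equal chunks fired in a fixed order. The marking after $i$ chunks is the convex combination $(1-i/n)\,\vec{m}_1 + (i/n)\,\vec{m}_2$, hence bounded below on $\prepost{T'}$ by the minimum $\delta>0$ of the two endpoints there, while within a chunk the deviation is at most $K/n$ for a constant $K$ depending only on $\mat{C}$ and $\vec{x}'$; choosing $n > K/\delta$ makes every step enabled, and places outside $\prepost{T'}$ are untouched by transitions of $T'$. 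This chunking argument is itself the standard proof of the convexity you invoke, so appealing to convexity adds nothing and is best replaced by the direct computation. With these details filled in, your proposal is a complete and faithful reconstruction of the cited proof rather than a genuinely different one.
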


In this characterization, $\vec{x}$ is supposed to be the Parikh image
of a firing sequence.  The first item expresses the state equation of
$\pns$ with respect to $\vec{m}_0$, $\vec{m}$ and $\vec{x}$.  The two
subsequent items express that the support of the solution of the state
equation has to lie in the firing sets of $\pns$ and its reverse. As
such, the characterization in Proposition~\ref{prop:q-reachability}
yields an \NP{} algorithm. By employing a greatest fixed point
computation, Algorithm~\ref{alg:q-reachability}, which is a decision
variant of the algorithm presented in~\cite{FH15}, turns those
criteria into a polynomial-time algorithm (see~\cite{FH15} for a proof
of its correctness).
\begin{algorithm}[t]
  \begin{algorithmic}[1]
    \REQUIRE PNS $\pns=(\net, \vec{m}_0)$ 
    with $\net=(P, T, \Pre, \Post)$ and a marking $\vec{m}$
    \IF{$\vec{m}=\vec{m}_0$}
    \RETURN $\mathit{true}$;
    \ENDIF
    \STATE $T' := T$;
    \WHILE{$T'\neq \emptyset$} 
    \STATE $S := \emptyset$;
    \FORALL{$t \in T'$}
      \STATE ${\vec{x}} := \mathit{solve}(\mat{C}_{P \times T'} \cdot \vec{x} =
      \vec{m} - \vec{m}_0 \wedge \vec{x}(t) > 0 \wedge \vec{x} \in \Qpos^{T'})$;
      \IF{$\vec{x} \neq \mathit{undef}$}
      \STATE $S := S \cup \eval{\vec{x}}$;
      \ENDIF
    \ENDFOR
    \IF{$S=\emptyset$}
    \RETURN $\mathit{false}$;
    \ENDIF
    \STATE $T' := \maxfs{\net_{S}}{\vec{m}_0[\prepost{S}]}) \cap
    \maxfs{\net_{S}^{-1}}{\vec{m}[\prepost{S}]})$
    \IF{$T' = S$}
    \RETURN $\mathit{true}$;
    \ENDIF
    \ENDWHILE
    \RETURN $\mathit{false}$;
  \end{algorithmic}
  \caption{$\Q$-reachability~\cite{FH15}}
  \label{alg:q-reachability}
\end{algorithm}
In order to use Algorithm~\ref{alg:q-reachability} for deciding
coverability, it is sufficient, for each place $p$, to add a
transition to $\net$ that can at any time non-deterministically
decrease $p$ by one token. Denote the resulting Petri net system by
$\pns'$, it can easily checked that $\vec{m}$ is $\Q$-coverable in
$\pns$ if and only if $\vec{m}$ is $\Q$-reachable in $\pns'$.


\section{Backward Coverability Modulo $\Q$-Reachability}\label{sec:approach}
We now present our decision algorithm for the Petri net coverability
problem.

\subsection{Encoding $\Q$-Reachability into Existential FO($\Qpos,+,>$)}\label{ssec:encoding}
Throughout this section, when used in formulas, $\vec{w}$ and
$\vec{x}$ are vectors of first-order variables indexed by $P$
representing markings, and $\vec{y}$ is a vector of first-order
variables indexed by $T$ representing the $\Q$-Parikh image of a
transition sequence.

Condition~(i) of Proposition~\ref{prop:q-reachability}, which
expresses the state equation, is readily expressed as a system of
linear equations and thus directly corresponds to a formula
$\Phi(\vec{w}, \vec{x},\vec{y})$ which holds whenever a marking
$\vec{x}$ is reached starting at marking $\vec{w}$ by firing every
transition $\vec{y}(t)$ times (without any consideration whether such
a firing sequence would actually be admissible):
\[
\Phi_{\mathit{eqn}}^{\net}(\vec{w}, \vec{x}, \vec{y}) \defeq \vec{x} = \mat{C} \cdot 
\vec{y} + \vec{w}.
\]
Next, we show how to encode Conditions~(ii) and~(iii) into suitable
formulas. To this end, we require an effective characterization of
membership in the firing set $\fs(\net, \vec{w})$ defined in
Section~\ref{ssec:q-coverability}. The following characterization can
be derived from~\cite[Cor.~19]{FH15}. First, we define a monotonic increasing
function $\incfs_{\net, \vec{w}} : 2^T \to 2^T$ as follows:
\[
\incfs_{\net, \vec{w}}(S) \defeq
  S \cup \left\{ t \in T(\net) : \pre{t} \subseteq \eval{\vec{w}} \cup 
  \{ \post{s} : s\in S\} \right\}.
\]
From~\cite[Cor.~19]{FH15}, it follows that $T'\in \fs(\net,\vec{w})$
if and only if $T' = \lfp(\incfs_{\net_{T'}, \vec{w}})$, where $\lfp$
is the least fixed point operator\footnote{In~\cite[Cor.~19]{FH15}, an
  algorithm is presented that basically computes
  $\lfp(\incfs_{\net_{T'}, \vec{w}})$.}, i.e.,
\[T' = \incfs_{\net_{T'},
  \vec{w}}(\cdots (\incfs_{\net_{T'}, \vec{w}}(\emptyset))\cdots
).
\]
Clearly, the least fixed
point is reached after at most $\abs{T'}$  iterations.

In order to decide whether $\eval{\vec{y}} \in \fs(\net,\vec{w})$, we
simulate this fixed-point computation in an existential
FO($\Qpos,+,>$)-formula $\Phi_{\fs}^\net(\vec{w},\vec{y})$. Our
approach is inspired by a technique of Verma, Seidl and Schwentick
that was used to show that the reachability relation for
communication-free Petri nets is definable by an existential
Presburger arithmetic formula of linear size~\cite{VSS05}. The basic
idea is to introduce additional first-order variables $\vec{z}$
indexed by $P\cup T$ that, given a firing set, capture the relative
order in which transitions of this set are fired and the order in
which their input places are marked.
This order corresponds to the computation of
$\lfp(\incfs_{\net_{\eval{\vec{y}}},\vec{w}})$ and is encoded via a
numerical value $\vec{z}(t)$ (respectively $\vec{z}(p)$), representing
an index that must be strictly greater than zero for a transition
(respectively an input place of a transition) of this set. In
addition, input places have to be marked before the firing of a
transition:
\[
\Phi_{\mathit{dt}}^\net(\vec{y}, \vec{z}) \defeq 
\bigwedge_{t\in T}\left( \vec{y}(t) > 0 \rightarrow \bigwedge_{p\in {\pre{t}}} 
0<\vec{z}(p) \leq \vec{z}(t)\right).
\]

Moreover, a place is either initially marked or after the firing of a
transition of the firing set. So:
%
\begin{multline*}
  \Phi_{\mathit{mk}}^{\net} (\vec{w},\vec{y}, \vec{z}) \defeq
  \bigwedge_{p\in P}  \left(
    \vec{z}(p) > 0 \rightarrow \left(\vec{w}(p) > 0 \vee
    \bigvee_{t\in \pre{p}} \vec{y}(t) > 0 \wedge 
    \vec{z}(t)<\vec{z}(p) \right)  \right).  
\end{multline*}
We can now take the conjunction of the formulas above in order to
obtain a logical characterization of $\fs(\net,\vec{w})$:
\[
\Phi_\fs^\net(\vec{w},\vec{y}) \defeq \exists \vec{z} : 
\Phi_{\mathit{dt}}^\net(\vec{y},\vec{z}) \wedge \Phi_{\mathit{mk}}^\net(\vec{w},
\vec{y},\vec{z}).
\]

Having logically characterized all conditions of
Proposition~\ref{prop:q-reachability}, we can define the global
$\Q$-reachability relation for a Petri net system $\pns = (\net,
\vec{w})$ as follows:
\[
\Phi_{\pns}(\vec{w}, \vec{x}) \defeq 
\exists \vec{y} : \Phi_{\mathit{eqn}}^\net(\vec{w},\vec{x},\vec{y}) \wedge 
\Phi_\fs^\net(\vec{w}, \vec{y}) \wedge \Phi_\fs^{\net^{-1}}(\vec{x}, \vec{y}).
\]
In summary, we have thus proved the following result in this section.
\begin{proposition}\label{prop:q-fo-reachability}
  Let $\pns=(\net, \vec{m}_0)$ be a Petri net system and $\vec{m}$ be
  a marking. There exists an existential
  $\text{FO}(\Qpos,+,>)$-formula $\Phi_\pns(\vec{w}, \vec{x})$ 
  computable in linear time  such
  that $\vec{m}$ is $\Q$-reachable in $\pns$ if and only if
  $\Phi_\pns(\vec{m}_0,\vec{m})$ is valid.
\end{proposition}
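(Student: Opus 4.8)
The plan is to route the entire argument through a single key lemma: for every Petri net $\net$, marking $\vec{w}$, and $\vec{y} \in \Qpos^T$, the formula $\Phi_\fs^\net(\vec{w}, \vec{y})$ holds if and only if $\eval{\vec{y}} \in \fs(\net, \vec{w})$. Granting this, the proposition follows by unfolding $\Phi_\pns(\vec{m}_0, \vec{m})$: a satisfying assignment is a vector $\vec{y} \in \Qpos^T$ for which $\Phi_{\mathit{eqn}}^\net(\vec{m}_0, \vec{m}, \vec{y})$ is exactly condition~(i) of Proposition~\ref{prop:q-reachability} with $\vec{x} = \vec{y}$, while $\Phi_\fs^\net(\vec{m}_0, \vec{y})$ and $\Phi_\fs^{\net^{-1}}(\vec{m}, \vec{y})$ are, via the key lemma applied to $\net$ and to its reverse $\net^{-1}$, exactly conditions~(ii) and~(iii). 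Hence $\Phi_\pns(\vec{m}_0, \vec{m})$ is valid precisely when the three conditions are simultaneously satisfiable, i.e.\ precisely when $\vec{m}$ is $\Q$-reachable. Both directions of the lemma rest on the fixed-point characterization $T' \in \fs(\net, \vec{w}) \iff T' = \lfp(\incfs_{\net_{T'}, \vec{w}})$ instantiated at $T' = \eval{\vec{y}}$.

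For the completeness direction (a satisfying $\vec{z}$ forces membership), I would first observe that $\lfp(\incfs_{\net_{T'}, \vec{w}}) \subseteq T'$ is immediate, since $\incfs_{\net_{T'}, \vec{w}}$ only ever adjoins transitions of $\net_{T'}$; the content is the reverse inclusion $T' \subseteq \lfp(\incfs_{\net_{T'}, \vec{w}})$. This I would establish by well-founded induction on the finitely many values $\vec{z}(t)$, $t \in T'$. Given $t \in T'$ and any $p \in \pre{t}$, the conjunct $\Phi_{\mathit{dt}}^\net$ yields $0 < \vec{z}(p) \le \vec{z}(t)$, and then $\Phi_{\mathit{mk}}^\net$ forces either $\vec{w}(p) > 0$, so $p \in \eval{\vec{w}}$, or a producer $s \in \pre{p}$ with $\vec{y}(s) > 0$ and $\vec{z}(s) < \vec{z}(p) \le \vec{z}(t)$. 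In the latter case $s \in T'$ has strictly smaller $\vec{z}$-value, so by induction $s$ already lies in the fixed point and $p \in \post{s}$; either way every input place of $t$ lies in $\eval{\vec{w}} \cup \bigcup_{s}\post{s}$, so $t$ is adjoined at the next iteration.

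For the soundness direction (membership yields a satisfying $\vec{z}$), I would extract the level structure of the fixed-point computation. Writing $\emptyset = S_0 \subsetneq S_1 \subsetneq \cdots \subsetneq S_k = T'$ for the iterates of $\incfs_{\net_{T'}, \vec{w}}$, let $r(t)$ be the least index with $t \in S_{r(t)}$, and for a place $p$ feeding some fired transition set $a(p) \defeq 0$ if $p \in \eval{\vec{w}}$ and otherwise $a(p) \defeq \min\{r(s) : s \in T',\ p \in \post{s}\}$, which is well-defined because $t$ enters only once all its input places are available. Assigning $\vec{z}(t) \defeq r(t)$, $\vec{z}(p) \defeq a(p) + \tfrac{1}{2}$ for such places, and $\vec{z}(p) \defeq 0$ otherwise, a direct check confirms both conjuncts: $a(p) < r(t)$ for every $p \in \pre{t}$ gives $\vec{z}(p) < \vec{z}(t)$ as required by $\Phi_{\mathit{dt}}^\net$, and the producer realizing $a(p)$ witnesses the disjunct of $\Phi_{\mathit{mk}}^\net$ with $\vec{z}(s) = a(p) < a(p) + \tfrac{1}{2} = \vec{z}(p)$.

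The linear-size and linear-time claim is then a counting argument: $\Phi_{\mathit{eqn}}^\net$ contributes one equation per place with one monomial per arc, $\Phi_{\mathit{dt}}^\net$ and $\Phi_{\mathit{mk}}^\net$ contribute one atomic constraint per arc of $\Pre$ and of $\Post$ respectively, and the quantifier prefix adds only $\abs{P} + \abs{T}$ fresh variables, so summing over $\net$ and $\net^{-1}$ leaves everything linear in the binary encoding of $\pns$ and $\vec{m}$, producible in one pass. I expect the main obstacle to be the bookkeeping in the soundness direction: one must verify that the half-integer offsets reconcile the non-strict inequality of $\Phi_{\mathit{dt}}^\net$ with the strict inequality of $\Phi_{\mathit{mk}}^\net$, and that $a(p)$ is genuinely defined and strictly below $r(t)$ for every input place of a fired transition. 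This is exactly the point where the correctness of the level decomposition of $\lfp(\incfs_{\net_{T'}, \vec{w}})$ is used, and it is the delicate step that makes the bounded $\vec{z}$-encoding faithfully simulate the least-fixed-point computation.
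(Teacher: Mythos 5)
Your proposal is correct and follows essentially the same route as the paper, which establishes the proposition exactly by the construction of $\Phi_{\mathit{eqn}}^\net$, $\Phi_{\mathit{dt}}^\net$, $\Phi_{\mathit{mk}}^\net$ and the fixed-point characterization $T' \in \fs(\net,\vec{w}) \iff T' = \lfp(\incfs_{\net_{T'},\vec{w}})$ from~\cite[Cor.~19]{FH15}. The only difference is that you fully spell out the correctness of the $\vec{z}$-encoding (the well-founded induction on $\vec{z}$-values for completeness and the level-based half-integer witness for soundness), details the paper leaves implicit, and your treatment of both is sound.
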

Checking satisfiability of $\Phi_\pns$ is in \NP, see
e.g.~\cite{Son85}. It is a valid question to ask why one would prefer
an \NP-algorithm over a polynomial-time one. We address this question
in the next section. For now, note that in order to obtain an even
more accurate over-approximation, we can additionally restrict
$\vec{y}$ to be interpreted in the natural numbers while retaining
membership of satisfiability in \NP, due to the following variant of
Proposition~\ref{prop:q-reachability}: If a marking is reachable in
$\pns$ then there exists some $\vec{y}\in \N^T$ such that
Conditions~(i), (ii) and~(iii) of
Proposition~\ref{prop:q-reachability} hold.
\begin{remark}
  Proposition~\ref{prop:q-fo-reachability} additionally allows us to
  improve the best known upper bound for the \emph{inclusion problem}
  of continuous Petri nets, which is \EXP~\cite{FH15}. Given two Petri
  net systems $\pns=(\net, \vec{m}_0)$ and $\pns'=(\net', \vec{m}_0')$
  over the same set of places, this problem asks whether the set of
  reachable markings of $\pns$ is included in $\pns'$, i.e., whether
  $\forall \vec{m}.\Phi_{\pns} (\vec{m}_0,\vec{m}) \rightarrow
  \Phi_{\pns'}(\vec{m}_0',\vec{m})$ is valid. The latter is a
  $\Pi_2$-sentence of $\text{FO}(\Qpos,+,>)$ and decidable in
  $\ComplexityFont{\Pi_2^\P}$~\cite{Son85}.  Hence, inclusion between
  continuous Petri nets is in $\ComplexityFont{\Pi_2^\P}$.
\end{remark}


\subsection{The Coverability Decision Procedure}\label{ssec:backward-coverability}
We now present Algorithm~\ref{alg:backward-modulo-q} for deciding
coverability. This algorithm is an extension of the classical backward
reachability algorithm that incorporates $\Q$-reachability checks
during its execution in order to keep the set of minimal basis
elements small.

\begin{algorithm}[t]
  \begin{algorithmic}[1]
    \REQUIRE PNS $\pns=(\net, \vec{m}_0)$ 
    and a marking $\vec{m}\in \N^P$
    \STATE $M := \{ \vec{m} \}$; $\Phi(\vec{x}) := \exists \vec{y}: 
    \Phi_\pns(\vec{m}_0, \vec{y}) \wedge \vec{y} \ge \vec{x}$;
    \IF{\NOT $\Q$-$\mathit{coverable}(\pns, \vec{\vec{m}})$}
    \RETURN $\mathit{false}$
    \ENDIF
    \WHILE{$\vec{m}_0 \not \in \uwclosure M$}
    \STATE $B := \pb{M} \setminus \uwclosure M$;
    \STATE $D := \{ \vec{v} \in B : \unsat(\Phi(\vec{v})) \}$;
    \STATE $B := B \setminus D$;
    \IF {$B = \emptyset$}
    \RETURN $\mathit{false}$;
    \ELSE
    \STATE $M := \Min(M \cup B)$;
    \STATE $\Phi(\vec{x}) := \Phi(\vec{x}) \wedge \bigwedge_{\vec{v}\in D}
    \vec{x} \not \ge \vec{v}$;
    \ENDIF
    \ENDWHILE
    \RETURN $\mathit{true}$;
  \end{algorithmic}
  \caption{Backward Coverability Modulo $\Q$-Reachability}
  \label{alg:backward-modulo-q}
\end{algorithm}

First, on Line~1 we derive an open formula $\Phi(\vec{x})$ from
$\Phi_\pns$ such that $\Phi(\vec{x})$ holds if and only if $\vec{x}$
is $\Q$-coverable in $\pns$. Then, on Line~2, the algorithm checks
whether the marking $\vec{m}$ is $\Q$-coverable using the
polynomial-time algorithm from~\cite{FH15} and returns that $\vec{m}$
is not coverable if this is not the case. Otherwise, the algorithm
enters a loop which iteratively computes a basis $M$ of the backward
coverability set starting at $\vec{m}$ whose elements are in addition
$\Q$-coverable in $\pns$. To this end, on Line~5 the algorithm
computes a set $B$ of new basis elements obtained from one application
of $\mathit{pb}$, and on Line~7 it removes from $B$ the set $D$ which
contains all elements of $B$ which are not $\Q$-coverable. If as a
result $B$ is empty the algorithm concludes that $\vec{m}$ is not
coverable in $\pns$. Otherwise, on Line~11 it adds the elements of $B$
to $M$. Finally, Line~12 makes sure that in future iterations of the
loop the underlying SMT solver can immediately discard elements that
lie in $\uwclosure D$. The latter is technically not necessary, but it
provides some guidance to the SMT solver. \iftechrep{The proof of the
  following proposition is deferred to
  Appendix~\ref{app:approach-correctness}.}{The proof of the following
  proposition can be found in~\cite{BFHH15}.}

\begin{proposition}\label{prop:approach-correctness}
  Let $\pns = (\net, \vec{m}_0)$ be a PNS and $\vec{m}$ be a marking.
  Then $\vec{m}$ is coverable in $\pns$ if and only if
  Algorithm~\ref{alg:backward-modulo-q} returns $\mathit{true}$.
\end{proposition}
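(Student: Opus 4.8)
The plan is to establish the correctness of Algorithm~\ref{alg:backward-modulo-q} by showing it computes the same answer as the classical backward coverability algorithm (Algorithm~\ref{alg:backward}), whose correctness is already established, while arguing that the additional $\Q$-coverability pruning on Lines~2 and~7 never discards an element whose upward closure is needed. The key observation that makes the pruning sound is the over-approximation fact highlighted in Section~\ref{sec:preliminaries}: $\vec{m} \xrightarrow{}^* \vec{m}'$ implies $\vec{m} \xrightarrow{}_\Q^* \vec{m}'$, so any marking that is coverable in the standard semantics is also $\Q$-coverable. I would split the proof into the two directions of the biconditional.

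For the \emph{soundness} direction (if the algorithm returns $\mathit{true}$ then $\vec{m}$ is coverable), I would argue that whenever the algorithm returns $\mathit{true}$, it does so because $\vec{m}_0 \in \uwclosure M$ for some basis $M$ built by the loop. Since every element added to $M$ arises from genuine applications of $\pb$ starting from $\{\vec{m}\}$, the set $\uwclosure M$ is always contained in the true backward-coverability set (the pruning only \emph{removes} elements, so $M$ is a subset of the basis that Algorithm~\ref{alg:backward} would maintain). Hence $\vec{m}_0 \in \uwclosure M$ certifies an actual firing sequence covering $\vec{m}$, exactly as in the classical algorithm.

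For the \emph{completeness} direction (if $\vec{m}$ is coverable then the algorithm returns $\mathit{true}$), the crucial step is to prove that the pruning never discards a marking that is an ancestor of $\vec{m}_0$ along a real covering sequence. Concretely, I would show by induction on firing-sequence length that every minimal basis element $\vec{v}$ generated by $\pb$ that lies on a path witnessing coverability of $\vec{m}$ from $\vec{m}_0$ is itself coverable in $\pns$, hence $\Q$-coverable, hence $\Phi(\vec{v})$ is satisfiable and $\vec{v}$ is \emph{not} placed in $D$. Using Proposition~\ref{prop:q-fo-reachability}, $\Phi(\vec{x})$ on Line~1 correctly characterizes $\Q$-coverability of $\vec{x}$, and the conjuncts added on Line~12 only exclude points in $\uwclosure D$, which are already known not to be $\Q$-coverable; thus these additions do not remove any genuinely $\Q$-coverable marking. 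Consequently $M$ still contains a basis for the relevant part of the backward-coverability set, and since $\vec{m}_0$ is coverable it eventually satisfies $\vec{m}_0 \in \uwclosure M$, triggering the $\mathit{true}$ return.

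\textbf{The main obstacle} I anticipate is verifying \emph{termination together with completeness} in the presence of pruning, i.e.\ ensuring that removing non-$\Q$-coverable elements cannot cause the fixed point to be reached prematurely with a $\mathit{false}$ answer when $\vec{m}$ is in fact coverable. The delicate point is to confirm that $B = \emptyset$ on Line~9 can only occur when \emph{no} coverable predecessor remains to be added, not merely when all newly generated elements happened to be non-$\Q$-coverable. This requires the key lemma that any predecessor of a $\Q$-coverable (indeed coverable) marking along a true path is again $\Q$-coverable, so that the pruned set $M$ still contains a complete basis modulo $\Q$-coverability; I would make this precise by a careful monotonicity argument relating the pruned iteration to the unpruned one and invoking well-quasi-orderedness of $\N^P$ for termination.
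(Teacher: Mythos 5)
Your proposal is correct and follows essentially the same route as the paper: the paper's own proof establishes by induction on the number of loop iterations that $\uwclosure{M_n}$ is exactly the set of $\Q$-coverable markings that can cover $\vec{m}$ within $n$ steps, which is precisely your key lemma---any marking lying below a marking on a genuine covering run is coverable, hence $\Q$-coverable, hence never pruned---packaged as a single invariant, with termination argued, as you do, from well-quasi-orderedness of $\N^P$ and stabilization of the chain $\uwclosure{M_1} \subseteq \uwclosure{M_2} \subseteq \cdots$. Your explicit verifications that the Line~12 conjuncts only exclude non-$\Q$-coverable markings (since the complement of the $\Q$-coverable set is upward closed) and that $B=\emptyset$ cannot occur prematurely when $\vec{m}$ is coverable are exactly the points the paper's terse induction silently absorbs, so nothing is missing.
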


\begin{remark}
  In our actual implementation, we use a slight variation of
  Algorithm~\ref{alg:backward-modulo-q} in which the instruction $M :=
  \Min(M \cup B)$ in Line~11 is replaced by $M := \Min(M \cup
  \minbottle_{c,k} B)$. Here, $c,k\in \N$ are parameters to the
  algorithm, and $\minbottle_{c,k} B$ is the set of the $c+\abs{B}/k$
  elements of $B$ with the smallest sum-norm. In this way, the
  empirically chosen parameters $c$ and $k$ create a bottleneck that
  gives priority to elements with small sum-norms, as they are more
  likely to allow for discarding elements with larger sum-norms in
  future iterations.
  
  This variation of Algorithm~\ref{alg:backward-modulo-q} has the same
  correctness properties as the original one: It can be shown that
  using $\minbottle_{c,k} B$ instead of $B$ in Line~11 computes the
  same set $\uwclosure{M}$ at the expense of delaying its
  stabilization. \label{remark:bottleneck}
\end{remark}


Before we conclude this section, let us come back to the question why
in our approach we choose using $\Phi_\pns$ (whose satisfiability is
in \NP) over Algorithm~\ref{alg:q-reachability} which runs in
polynomial time. In Algorithm~\ref{alg:backward-modulo-q}, we invoke
Algorithm~\ref{alg:q-reachability} only once in Line~2 in order to
check if $\pns$ is not $\Q$-coverable, and thereafter only employ
$\Phi_\pns$ which gets incrementally updated during each iteration of
the loop. The reason is that in practice as observed in our
experimental evaluation below, Algorithm~\ref{alg:q-reachability}
turns out to be often faster for a \emph{single} $\Q$-coverability
query. Otherwise, as soon $\Phi_\pns$ has been checked for
satisfiability once, future satisfiability queries are significantly
faster than Algorithm~\ref{alg:q-reachability}, which is a desirable
behavior inside a backward coverability framework. Moreover we can
constraint solutions to be in $\N$ instead of $\Q$, leading to a more
precise over approximation.


\subsection{Relationship to the CEGAR-approach of Esparza et al.}\label{ssec:esparza}
In~\cite{ELMMN14}, Esparza~\emph{et al.} presented a semi-decision
procedure for coverability that is based on~\cite{EM00} and employs
the Petri net state equation and traps inside a CEGAR-framework. A
\emph{trap in $\net$} is a non-empty subset of places $Q\subseteq P$
such that $\post{Q} \subseteq \pre{Q}$, and $Q\subseteq P$ is a
\emph{siphon in $\net$} whenever $\pre{Q} \subseteq \post{Q}$. Given a
marking $\vec{m}$, a trap (respectively siphon) is \emph{marked in
  $\vec{m}$} if $\sum_{p\in Q} \vec{m}(p)>0$. An important property of
traps is that if a trap is marked in $\vec{m}$, it will remain marked
after any firing sequence starting in $\vec{m}$. Conversely, when a
siphon is unmarked in $\vec{m}$ it remains so after any firing
sequence starting in $\vec{m}$. By definition, $Q$ is a trap in $\net$
if and only if $Q$ is a siphon in $\net^{-1}$. The coverability
criteria that~\cite{ELMMN14} builds upon are derived from~\cite{EM00}
and can be summarized as follows.

\begin{proposition}[\cite{ELMMN14}]\label{prop:esparza-conditions}
  If $\vec{m}$ is $\Q$-reachable (respectively reachable) in
  $(\net,\vec{m}_0)$ then there exists $\vec{x} \in \Qpos^T$
  (respectively $\vec{x} \in \N^T$) such that:
  \begin{enumerate}[(i)]
  \item $\vec{m} = \vec{m}_0 + \mat{C} \cdot \vec{x}$
  \item for all traps $Q\subseteq P$, if $Q$ is marked in $\vec{m}_0$
    then $Q$ is marked in $\vec{m}$
  \end{enumerate}
\end{proposition}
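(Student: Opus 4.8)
The plan is to establish the two conditions separately while handling the discrete and continuous cases uniformly. Condition~(i) is simply the state equation and comes for free from the respective notion of reachability. In the discrete case, if $\vec{m}_0 \xrightarrow{\sigma}^* \vec{m}$ via a firing sequence $\sigma = t_1\cdots t_k$, I would take $\vec{x} \defeq \parikh(\sigma) \in \N^T$; telescoping the per-step updates $\vec{m}_i(p) = \vec{m}_{i-1}(p) + \mat{C}(p, t_i)$ along $\sigma$ yields $\vec{m} = \vec{m}_0 + \mat{C}\cdot\vec{x}$. In the continuous case, Condition~(i) of Proposition~\ref{prop:q-reachability} already supplies an $\vec{x}\in\Qpos^T$ with $\vec{m} = \vec{m}_0 + \mat{C}\cdot\vec{x}$, so nothing further is required.

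The real content lies in Condition~(ii), for which I would prove the stronger invariant that a trap marked in a given marking stays marked after a single firing; the statement then follows by induction on the length of the $(\Q\text{-})$firing sequence from $\vec{m}_0$ to $\vec{m}$. Fix a trap $Q$ marked in $\vec{m}$ and consider firing $t$ by an amount $q>0$ (with $q=1$ in the discrete case) to reach $\vec{m}'$, where $\vec{m}'(p) = \vec{m}(p) + q\cdot\mat{C}(p,t)$. I would split on whether $t \in \post{Q}$. If $t \notin \post{Q}$, then $\Pre(p,t)=0$ for every $p\in Q$, so $\mat{C}(p,t)=\Post(p,t)\ge 0$ and hence $\vec{m}'(p)\ge\vec{m}(p)$ for all $p\in Q$; since $Q$ was marked, it stays marked. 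If $t\in\post{Q}$, the trap condition $\post{Q}\subseteq\pre{Q}$ guarantees a place $p^*\in Q$ with $\Post(p^*,t)>0$. The enabling condition $q\le\enab{t}{\vec{m}}$ gives $q\cdot\Pre(p^*,t)\le\vec{m}(p^*)$, whence $\vec{m}'(p^*)=\vec{m}(p^*)-q\cdot\Pre(p^*,t)+q\cdot\Post(p^*,t)\ge q\cdot\Post(p^*,t)>0$, so $Q$ remains marked through $p^*$.

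Applying this invariant inductively along the sequence witnessing $\vec{m}_0 \xrightarrow{}^* \vec{m}$ (respectively $\vec{m}_0 \xrightarrow{}_\Q^* \vec{m}$), I conclude that any trap marked in $\vec{m}_0$ is still marked in $\vec{m}$, which is exactly Condition~(ii). I do not anticipate a serious obstacle here: the only subtlety is ensuring that the continuous enabling degree $\enab{t}{\vec{m}}$ yields the pointwise inequality $q\cdot\Pre(p,t)\le\vec{m}(p)$ needed at $p^*$, which follows directly from its definition as a minimum over $\pre{t}$. Since the case analysis never relies on $q$ being integral, the same argument covers both the reachable and the $\Q$-reachable cases at once, with the choice $\vec{x}\in\N^T$ versus $\vec{x}\in\Qpos^T$ dictated solely by Condition~(i).
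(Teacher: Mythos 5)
Your proof is correct and matches the paper's intended justification: the paper does not prove this proposition itself (it is imported from \cite{ELMMN14}), but the two ingredients you use---the state equation obtained by telescoping along the ($\Q$-)Parikh image of the witnessing firing sequence, and the invariant, stated in the prose just before the proposition, that a marked trap remains marked under firing---constitute exactly the standard argument, and your case analysis on $t \in \post{Q}$ (using the trap condition $\post{Q} \subseteq \pre{Q}$ to find $p^* \in Q$ with $\Post(p^*,t) > 0$, and the enabling degree to bound $q \cdot \Pre(p^*,t) \le \vec{m}(p^*)$ when $\Pre(p^*,t) > 0$) establishes that invariant soundly in both semantics. The only detail left implicit is the continuous firing amount $q = 0$, which leaves the marking unchanged and hence preserves the invariant trivially.
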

As in our approach, in~\cite{ELMMN14} those criteria are checked using
an SMT-solver. The for-all quantifier is replaced in~\cite{ELMMN14} by
incrementally enumerating all traps in a CEGAR-style fashion. It is
shown in~\cite[Prop.\ 18]{FH15} that Condition~(iii) of
Proposition~\ref{prop:q-reachability} is equivalent to requiring that
$\net_{\eval{\vec{x}}}^{-1}$ has no unmarked siphon in $\vec{m}$,
which appears to be similar to Condition~(ii) of
Proposition~\ref{prop:esparza-conditions}. In fact, we show the
following.
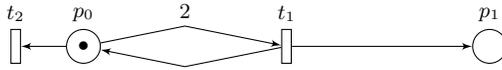
\begin{figure}[t]
\begin{center}
\begin{tikzpicture}[xscale=1,yscale=1,scale=0.9, every node/.style={scale=0.9}]

\path (0,0) node[draw,circle,inner sep=2pt,minimum height=0.5cm] (s0) [label=above:$p_0$] {$\bullet$};
\path (3,0) node[draw,rectangle,inner sep=2pt,minimum height=0.5cm] (s1)
[label=above:$t_1$] {};
\path (6,0) node[draw,circle,inner sep=2pt,minimum height=0.5cm] (s2)[label=above:$p_1$] {};
\path (-1,0) node[draw,rectangle,inner sep=2pt,minimum height=0.5cm] (s3)
[label=above:$t_2$] {};


\draw[arrows=-latex'] (s0) -- (1.5,0.3) -- (s1) node[pos=0,above] {$2$};
\draw[arrows=-latex'] (s1) -- (1.5,-0.3) -- (s0) node[pos=.5,above] {};
\draw[arrows=-latex'] (s1) -- (s2) node[pos=.5,above] {};
\draw[arrows=-latex'] (s0) -- (s3) node[pos=.5,above] {};

%
%
\end{tikzpicture}
\end{center}
  
  \caption{A Petri net that cannot mark $p_1$.}
  \label{fig:strictness}
\end{figure}
\begin{proposition}\label{lem:esparza-comparison}
  Conditions~(i) and~(iii) of Proposition~\ref{prop:q-reachability}
  strictly imply
  Conditions~(i) and~(ii) of Proposition~\ref{prop:esparza-conditions}
  (when interpreted over $\Qpos$).
\end{proposition}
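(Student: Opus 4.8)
The plan is to treat the two assertions of the proposition separately: the (non-strict) implication first, and then strictness, witnessed by the net in Figure~\ref{fig:strictness}.

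For the implication, Condition~(i) of the two propositions is literally the same statement, so the entire content is to derive Condition~(ii) of Proposition~\ref{prop:esparza-conditions} from Conditions~(i) and~(iii) of Proposition~\ref{prop:q-reachability}. Fix a witness $\vec{x} \in \Qpos^T$ and put $S \defeq \eval{\vec{x}}$. The tool I would use is the characterization from \cite[Prop.\ 18]{FH15} quoted above: Condition~(iii) is equivalent to $\net_S^{-1}$ having no siphon that is unmarked in $\vec{m}$. Dualizing through the identity ``$Q$ is a trap in $N$ iff $Q$ is a siphon in $N^{-1}$'' applied to $N = \net_S$, this is the same as saying that $\net_S$ has no nonempty trap unmarked in $\vec{m}$; this reformulation is what I would actually use.

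I would then argue Condition~(ii) by contraposition. Suppose some trap $Q$ of $\net$ is marked in $\vec{m}_0$ but unmarked in $\vec{m}$, and set $R \defeq Q \cap \prepost{S}$. The first claim is that $R$ is a trap of the subnet $\net_S$: given $t \in S$ consuming from a place of $R \subseteq Q$, the trap property of $Q$ in $\net$ yields an output place $p'$ of $t$ lying in $Q$; since $t \in S$, that $p'$ is a neighbour of $S$, hence $p' \in Q \cap \prepost{S} = R$, so $t$ also feeds $R$ inside $\net_S$. As $R \subseteq Q$, $R$ is unmarked in $\vec{m}$. If $R \neq \emptyset$ this contradicts the reformulation of Condition~(iii) above. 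If instead $R = \emptyset$, then no transition of $S$ is incident to any place of $Q$, so $(\mat{C}\cdot\vec{x})(p) = 0$ for every $p \in Q$ (recall $\vec{x}(t) = 0$ for $t \notin S$); Condition~(i) then forces $\vec{m}(p) = \vec{m}_0(p)$ on $Q$, so $Q$ being unmarked in $\vec{m}$ makes it unmarked in $\vec{m}_0$, contradicting our assumption. Either way we reach a contradiction, which establishes Condition~(ii).

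For strictness I would exhibit the net of Figure~\ref{fig:strictness} together with the target marking $\vec{m} = (0,1)$ (one token on $p_1$, none on $p_0$). The state equation forces the unique solution $\vec{x} = (1,0)$, so Condition~(i) holds over $\Qpos$; moreover the only nonempty trap of this net is $\{p_1\}$, which is unmarked in $\vec{m}_0 = (1,0)$, so Condition~(ii) of Proposition~\ref{prop:esparza-conditions} holds vacuously. However, with $S = \{t_1\}$ the place $p_0$ forms a siphon of $\net_S^{-1}$ that is unmarked in $\vec{m}$, so Condition~(iii) of Proposition~\ref{prop:q-reachability} fails; equivalently, $t_1$ cannot be fired in the reverse net from $\vec{m}$ since $p_0$ is empty. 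Thus Esparza's conditions hold while Conditions~(i) and~(iii) do not, giving strictness. I expect the main obstacle to be the bookkeeping around the subnet restriction: one must check that intersecting a trap of $\net$ with $\prepost{S}$ again yields a trap of $\net_S$, correctly match the trap/siphon duality to $\net_S^{-1}$, and separately handle the degenerate case $R = \emptyset$, where the trap survives untouched by $\vec{x}$ and the argument falls back on the state equation.
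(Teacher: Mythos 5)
Your proof is correct and takes essentially the same route as the paper's: the same appeal to \cite[Prop.~18]{FH15}, the same argument that intersecting the trap $Q$ with $\prepost{S}$ yields a trap of $\net_{S}$ (which the paper phrases dually as a siphon of $\net^{-1}_{\eval{\vec{x}}}$), the same fallback to the state equation in the degenerate case where this intersection is empty, and the identical strictness witness from Figure~\ref{fig:strictness}. The only differences are presentational --- you argue by direct contradiction for a fixed witness $\vec{x}$ where the paper states a global contrapositive --- so nothing further is needed.
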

\begin{proof}
  We only show strictness, \iftechrep{the remainder of the proof is
    deferred to Appendix~\ref{app:esparza-comparison}.}{the full proof
    can be found in~\cite{BFHH15}.} To this end, consider the Petri
  net $(\net,\vec{m}_0)$ depicted in Figure~\ref{fig:strictness} with
  $\vec{m}=(0,1)$.  Clearly $\vec{m}$ is not reachable.  There is a
  single solution to the state equation $\vec{x}=(1,0)$.  There is a
  single trap $\{p_1\}$ which is unmarked in $\vec{m}_0$. So the
  conditions of Proposition~\ref{prop:esparza-conditions} hold, and
  hence the algorithm of~\cite{ELMMN14} does not decide this net
  safe. On the contrary in $\net_{\eval{\vec{x}}}^{-1}$, the reverse
  net without $t_2$, $\{p_0\}$ is a siphon that is unmarked in
  $\vec{m}$. So Condition~(iii) of
  Proposition~\ref{prop:q-reachability} does not hold. \qed
\end{proof}
This proposition shows that the single formula stated in
Proposition~\ref{prop:q-fo-reachability} strictly subsumes the
approach from~\cite{ELMMN14}. Moreover, it provides a theoretical
justification for why the approach of~\cite{ELMMN14} performs so well
in practice: the conditions are a strict subset of the conditions
developed for $\Q$-reachability in~\cite{FH15}.



\section{Experimental Evaluation}\label{sec:benchmarks}

We evaluate the backward coverability modulo $\Q$-reachability
algorithm on standard benchmarks from the literature with two goals in
mind. First, we demonstrate that our approach is competitive with
existing approaches. In particular, we prove significantly more safe
instances of our benchmarks safe in less time when compared to any
other approach. Overall our algorithm decides 142 out of 176
instances, the best competitor decides 122 instances. Second, we
demonstrate that $\Q$-coverability is a powerful pruning criterion by
analyzing the relative number of minimal bases elements that get
discarded during the execution of
Algorithm~\ref{alg:backward-modulo-q}.

We implemented Algorithm~\ref{alg:backward-modulo-q} in a tool called
\tool{} in the programming language
\textsc{Python}.\footnote{\scriptsize\tool{} is available at
  \url{http://www-etud.iro.umontreal.ca/~blondimi/qcover/}.} The
underlying SMT-solver is \textsc{z3}~\cite{z3}. For the
$\minbottle_{c,k}$ heuristic mentioned in
Remark~\ref{remark:bottleneck}, we empirically chose $c = 10$ and
$k=5$. We observed that any sane choice of $c$ and $k$ leads to an
overall speed-up, though different values lead to different (even
increasing) running times on individual instances. \tool{} takes as
input coverability instances in the \mist{} file
format.\footnote{\scriptsize\url{https://github.com/pierreganty/mist/wiki\#input-format-of-mist}}
The basis of our evaluation is the benchmark suite that was used in
order to evaluate the tool \petrinizer, see~\cite{ELMMN14} and the
references therein. This suite consists of five benchmark categories:
\texttt{mist}, consisting of 27 instances from the \mist{} toolkit;
\texttt{bfc}, consisting of 46 instances used for evaluating BFC;
\texttt{medical} and \texttt{bug\_tracking}, consisting of 12 and 41
instances derived from the provenance analysis of messages of a
medical and a bug-tracking system, respectively; and \texttt{soter},
consisting of 50 instances of verification conditions derived from
Erlang programs~\cite{DOsualdoKO13}.

We compare \tool{} with the following tools:
\petrinizer{}~\cite{ELMMN14}, \mist{}~\cite{Gan02} and
\bfc{}~\cite{KKW14} in their latest versions available at the time of
writing of this paper. \mist{} implements a number of algorithms, we
use the backward algorithm that uses places invariant
pruning~\cite{GMDKRB07}.\footnote{\scriptsize
  \url{https://github.com/pierreganty/mist/wiki\#coverability-checkers-included-in-mist}}
All benchmarks were performed on a single computer equipped with four
Intel\textregistered{} Core\texttrademark{} 2.00 GHz
%
%
CPUs, 8 GB of memory and Ubuntu Linux 14.04 (64 bits). The execution
time of the tools was limited to 2000 seconds (\ie\ 33 minutes and 20
seconds) per benchmark instance. The running time of every tool on an
instance was determined using the sum of the user and sys time
reported by the Linux tool \texttt{time}.

\begin{figure}[t]
 


\begin{center}
\resizebox{\columnwidth}{!}{%
\begin{tabular}{cc}
\begin{tabular}{|c|c|c|c|c||c|}
 \hline Suite & \tool & \petrinizer & \mist & \bfc & Total \\
\hhline{|=|=|=|=|=|=|} \texttt{mist} & \textbf{23} & 20 & 22 & 20 & 23 \\
\hline \texttt{medical} & \textbf{11} & 4 & \textbf{11} & 3 & 12 \\
\hline \texttt{bfc} & \textbf{2} & \textbf{2} & \textbf{2} & \textbf{2} & 2 \\
\hline \texttt{bug\_tracking} & \textbf{32} & \textbf{32} & 0 & 19 & 40\\
\hline \texttt{soter} & \textbf{37} & \textbf{37} & 0 & 19 & 38 \\
\hhline{|=|=|=|=|=|=|} Total & \textbf{105} & 95 & 35 & 63 & 115 \\
\hline
\end{tabular} \hspace{0.20cm} &
\begin{tabular}{|c|c|c|c|c||c|}
 \hline Suite & \tool & \petrinizer & \mist & \bfc & Total \\
\hhline{|=|=|=|=|=|=|} \texttt{mist} & 3 & --- & \textbf{4} & \textbf{4} & 4\\
\hline \texttt{medical} & --- & --- & --- & --- & 0 \\
\hline \texttt{bfc} & 26 & --- & 29 & \textbf{42} & 44 \\
\hline \texttt{bug\_tracking} & 0 & --- & 0 & \textbf{1} & 1 \\
\hline \texttt{soter} & 8 & --- & 6 & \textbf{12} & 12 \\
\hhline{|=|=|=|=|=|=|} Total & 37 & 0 & 39 & \textbf{59} & 61 \\
\hline
\end{tabular} \\[1.5cm]
\multicolumn{2}{c}{
\begin{tabular}{|c|c|c|c|c||c|}
 \hline Suite & \tool & \petrinizer & \mist & \bfc & Total \\
\hhline{|=|=|=|=|=|=|} \texttt{mist} & \textbf{26} & 20 & \textbf{26} & 24 & 27 \\
\hline \texttt{medical} & \textbf{11} & 4 & \textbf{11} & 3 & 12 \\
\hline \texttt{bfc} & 28 & 2 & 31 & \textbf{44} & 46 \\
\hline \texttt{bug\_tracking} & \textbf{32} & \textbf{32} & 0 & 20 & 41 \\
\hline \texttt{soter} & \textbf{45} & 37 & 6 & 31 & 50 \\
\hhline{|=|=|=|=|=|=|} Total & \textbf{142} & 95 & 74 & 122 & 176 \\
\hline
\end{tabular}
}
\end{tabular}
}
\end{center}

  \caption{Number of safe instances (top-left), unsafe instances (top-right)
    and total instances (bottom) decided by every tool. Bold numbers
    indicate the tool(s) which decide(s) the largest number of
    instances in the respective category.}
  \label{fig:results-table}
\end{figure}

Figure~\ref{fig:results-table} contains three tables which display the
number of safe instances shown safe, unsafe instances shown unsafe,
and the total number of instances of our benchmark suite decided by
each individual tool. As expected, our algorithm outperforms all
competitors on safe instances, since in this case a proof of safety
(\ie\ non-coverability) effectively requires the computation of the
whole backward coverability set, and this is where pruning via
$\Q$-coverability becomes most beneficial. On the other hand, \tool{}
remains competitive on unsafe instances, though a tool such as
\textsc{BFC} handles those instances better since its heuristics are
more suited for proving unsafety (\ie\ coverability). Nevertheless,
\tool{} is the overall winner when comparing the number of safe and
unsafe instances decided, being far ahead at the top of the
leader-board deciding 142 out of 176 instances.

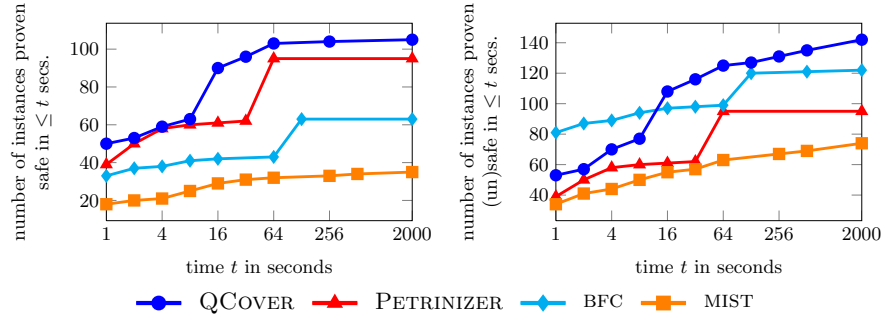
\begin{figure}
\begin{center}
\begin{tikzpicture}[scale=0.9, every node/.style={scale=0.9}]
  \begin{semilogxaxis}[
    xmin=1,
    xmax=2000,
    domain=1:2000,
    xtick={1, 4, 16, 64, 256, 2000},
    xticklabels={1, 4, 16, 64, 256, 2000},
    width=0.5\textwidth,
    height=4.5cm,
    xlabel=time $t$ in seconds,
    ylabel style={align=center},
    ylabel=number of instances proven \\ safe in $\leq t$ secs.,
    legend style={
      at={(0,0.65)},
      anchor=south west},
    reverse legend
    ]

    \addplot[very thick, color=red, mark=triangle*] coordinates {
      (1, 39)
      (2, 50)
      (4, 58)
      (8, 60)
      (16, 61)
      (32, 62)
      (64, 95)
      (2000, 95)
    };

    \addplot[very thick, color=cyan, mark=diamond*] coordinates {
      (1, 33)
      (2, 37)
      (4, 38)
      (8, 41)
      (16, 42)
      (64, 43)
      (128, 63)
      (2000, 63)
    };

    \addplot[very thick, color=orange, mark=square*] coordinates {
      (1, 18)
      (2, 20)
      (4, 21)
      (8, 25)
      (16, 29)
      (32, 31)
      (64, 32)
      (256, 33)
      (512, 34)
      (2000, 35)
    };

    \addplot[very thick, color=blue, mark=*] coordinates {
      (1, 50)
      (2, 53)
      (4, 59)
      (8, 63)
      (16, 90)
      (32, 96)
      (64, 103)
      (256, 104)
      (2000, 105)
    };
  \end{semilogxaxis}
\end{tikzpicture}\;
\begin{tikzpicture}[scale=0.9, every node/.style={scale=0.9}]
  \begin{semilogxaxis}[
    xmin=1,
    xmax=2000,
    domain=1:2000,
    xtick={1, 4, 16, 64, 256, 2000},
    xticklabels={1, 4, 16, 64, 256, 2000},
    ytick={20, 40, 60, 80, 100, 120, 140},
    width=0.5\textwidth,
    height=4.5cm,
    xlabel=time $t$ in seconds,
    ylabel style={align=center},
    ylabel=number of instances proven \\ (un)safe in $\leq t$ secs.,
    legend style={draw=none, legend columns=-1},
    reverse legend
    ]

    \addplot[very thick, color=red, mark=triangle*] coordinates {
      (1, 39)
      (2, 50)
      (4, 58)
      (8, 60)
      (16, 61)
      (32, 62)
      (64, 95)
      (2000, 95)
    };

    \addplot[very thick, color=cyan, mark=diamond*] coordinates {
      (1, 81)
      (2, 87)
      (4, 89)
      (8, 94)
      (16, 97)
      (32, 98)
      (64, 99)
      (128, 120)
      (512, 121)
      (2000, 122)
    };

    \addplot[very thick, color=orange, mark=square*] coordinates {
      (1, 34)
      (2, 41)
      (4, 44)
      (8, 50)
      (16, 55)
      (32, 57)
      (64, 63)
      (256, 67)
      (512, 69)
      (2000, 74)
    };

    \addplot[very thick, color=blue, mark=*] coordinates {
      (1, 53)
      (2, 57)
      (4, 70)
      (8, 77)
      (16, 108)
      (32, 116)
      (64, 125)
      (128, 127)
      (256, 131)
      (512, 135)
      (2000, 142)
    };
  \end{semilogxaxis}
\end{tikzpicture}
\begin{tikzpicture}
    \begin{customlegend}[legend columns=-1,legend style={draw=none,column sep=1ex},legend entries={\tool, \petrinizer, \bfc, \mist}]
    \addlegendimage{blue,fill=blue,mark=*, ultra thick}
    \addlegendimage{red,fill=red,mark=triangle*, ultra thick}
    \addlegendimage{cyan,fill=cyan,mark=diamond*, ultra thick}
    \addlegendimage{orange,fill=orange,mark=square*, ultra thick}
    \end{customlegend}
\end{tikzpicture}

\end{center}
  \caption{Cumulative number of instances proven safe (left) and total
    number of instances decided (right) within a fixed amount of
    time.}
  \label{fig:results-graph}
\end{figure}

\tool{} not only decides more instances, it often does so faster than
its competitors. Figure~\ref{fig:results-graph} contains two graphs
which show the cumulative number of instances proven safe and the
total number of instances decided on all suites by each tool within a
certain amount of time. When it comes to safety, \tool{} is always
ahead of all other tools. However, when looking at all instances
decided, \textsc{BFC} first has an advantage. We observed that this
advantage occurs on instances of comparably small size. As soon as
large instances come into play, \tool{} wins the race. Besides
different heuristics used, one reason for this might be the choice of
the implementation language (C for BFC vs.\ \textsc{Python} for
\tool{}). In particular, BFC can decide a non-negligible number of
instances in less than 10ms, which \tool{} never achieves.

\begin{figure}[t]
\begin{center}
\begin{tikzpicture}
  \begin{axis}[
      xlabel=percentage $x$,
      ylabel style={align=center},
      ylabel=number of iterations with \\ $\geq x$\% elements pruned,
      xmin=0, 
      xmax=100,
      ymin=0,
      height=4cm,
      width=0.45\textwidth
    ]

    \addplot[color=blue, fill=blue, opacity=0.75, smooth] coordinates {
      (0, 542)
      (5, 483)
      (10, 475)
      (15, 461)
      (20, 455)
      (25, 441)
      (30, 416)
      (35, 397)
      (40, 373)
      (45, 355)
      (50, 334)
      (55, 292)
      (60, 263)
      (65, 229)
      (70, 200)
      (75, 179)
      (80, 162)
      (85, 130)
      (90, 103)
      (95, 60)
      (100, 2)
      }
    |- (axis cs:0,0) -- cycle;
    \end{axis}
\end{tikzpicture}
\;
\begin{tikzpicture}
  \begin{axis}[
      xlabel=percentage $x$,
      ylabel style={align=center},
      ylabel=number of iterations with \\ $x$\% elements pruned,
      xmin=0, 
      xmax=100,
      ymin=0,
      height=4cm,
      width=0.45\textwidth,
      extra x ticks={55.68},
      extra x tick style={
        xticklabel pos=right,
        xticklabels={average},
        xmajorgrids=true}
    ]

    \addplot[ybar, bar width=5.15, color=blue, fill=blue, opacity=0.75]
    coordinates {
      (2.5, 59)
      (7.5, 8)
      (12.5, 14)
      (17.5, 6)
      (22.5, 14)
      (27.5, 25)
      (32.5, 19)
      (37.5, 24)
      (42.5, 18)
      (47.5, 21)
      (52.5, 42)
      (57.5, 29)
      (62.5, 34)
      (67.5, 29)
      (72.5, 21)
      (77.5, 17)
      (82.5, 32)
      (87.5, 27)
      (92.5, 43)
      (97.5, 60)
    };
    \end{axis}
\end{tikzpicture}
\end{center}
  \caption{Number of times a certain percentage of basis elements was
    removed due to $\Q$-coverability pruning.}
  \label{fig:q-coverability-pruning}
\end{figure}
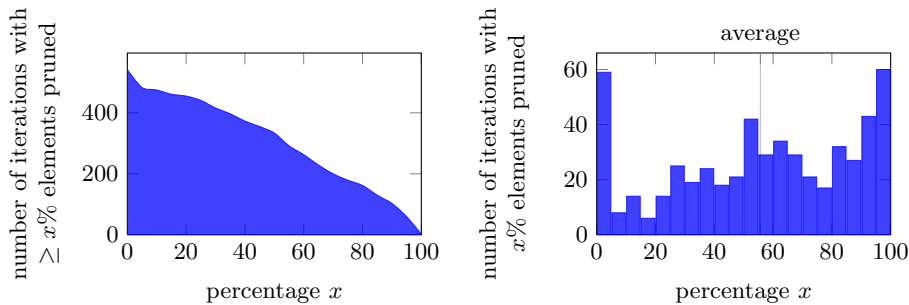

Finally, we consider the effectiveness of using $\Q$-coverability as a
pruning criterion. To this end, consider
Figure~\ref{fig:q-coverability-pruning} in which we plotted the number
of times a certain percentage of basis elements was removed due to not
being $\Q$-coverable. Impressively, in some cases more than 95\% of
the basis elements get discarded. Overall, on average we discard 56\%
of the basis elements, which substantiates the usefulness of using
$\Q$-coverability as a pruning criterion.

Before we conclude, let us mention that already 83 instances are
proven safe by only checking the state equation, and that additionally
checking for the criteria~(ii) and~(iii) of
Proposition~\ref{prop:q-reachability} increases this number to 101
instances. If we use Algorithm~\ref{alg:q-reachability} instead of our
FO($\Qpos,+,>$) encoding then we can only decide 132 instances in
total.  Finally, in our experiments, interpreting variables over $\Q$
instead of $\N$ resulted in no measurable overall performance gain.

In summary, our experimental evaluation shows that the backward
coverability modulo $\Q$-reachability approach to the Petri net
coverability problem developed in this paper is highly efficient when
run on real-world instances, and superior to existing tools and
approaches when compared on standard benchmarks from the literature.


\section{Conclusion}\label{sec:discussion}

In this paper, we introduced backward coverability modulo
$\Q$-reachability, a novel approach to the Petri net coverability
problem that is based on using coverability in continuous Petri nets
as a pruning criterion inside a backward coverability framework. A key
ingredient for the practicality of this approach is an existential
FO($\Qpos,+,>$)-characterization of continuous reachability, which we
showed to strictly subsume a recently introduced coverability
semi-decision procedure~\cite{ELMMN14}. Finally, we demonstrated that
our approach significantly outperforms existing ones when compared on
standard benchmarks.

There are a number of possible avenues for future work. It seems
promising to combine the forward analysis approach based on
incrementally constructing a Karp-Miller tree that is used in
BFC~\cite{KKW14} with the $\Q$-coverability approach introduced in
this paper. In particular, recently developed minimization and
acceleration techniques for constructing Karp-Miller trees should
prove beneficial, see e.g.~\cite{GRB10,RS13,BH14}. Another way to
improve the empirical performance of our algorithm is to internally
use more efficient data structures such as sharing
trees~\cite{DRB04}. It seems within reach that a tool which combines
all of the aforementioned techniques and heuristics could decide all
of the benchmark instances we used in this paper within reasonable
resource restrictions.

\smallskip
\noindent
\textbf{Acknowledgments.} 
We would like to thank Vincent Antaki for an early implementation of
Algorithm~\ref{alg:q-reachability}. We would also like to thank Gilles
Geeraerts for his support with the MIST file format.

\newpage
\bibliographystyle{plain}
\bibliography{bibliography}

\iftechrep{
\newpage
\begin{appendix}
  \section{Proof of Propositon~\ref{prop:approach-correctness}}\label{app:approach-correctness}

\begin{proof}
Let $B_n$ and
  $M_n$ be respectively the value of $B$ and $M$ at Line 7 and Line 11
  in the $n$-th iteration of the while-loop. It is possible to show by
  induction on $n$ that
  \begin{align*}
    \uwclosure{M_n} &= \bigcup_{\vec{y} \in \uwclosure{\vec{m}}}
    \left\{\vec{x} \in \N^d : \text{$\Q$-$\mathit{coverable}$}(\pns,
    \vec{\vec{x}}) \land \exists \sigma \in T^* \text{ s.t. } \vec{x}
    \xrightarrow{\sigma} \vec{y} \land \abs{\sigma} \leq n\right\}\\
    B_n & \subseteq \uwclosure{M_n} \setminus \uwclosure{M_{n-1}}.
  \end{align*}
  Since $\vec{m}$ can be covered from $\vec{m}_0$ if and only if
  $\Q$-$\mathit{coverable}(\pns, \vec{\vec{m}_0})$ and $\vec{m}_0
  \xrightarrow{}^* \vec{m}$, it follows that Line 9 and 13 are correct
  and thus that the algorithm is correct. Regarding termination, it is
  well-known that, for well-quasi-ordered sets, any inclusion chain of
  upward closed sets stabilizes. Therefore, the chain $\uwclosure{M_1}
  \subseteq \uwclosure{M_2} \subseteq \ldots$ stabilizes to some
  $\uwclosure{M_n}$. Thus, after a finite number of iterations, either
  $B$ becomes empty at Line 8 or $\vec{m}_0 \in \uwclosure{M}$ at Line
  4. Hence, the algorithm always terminates.\qed
\end{proof}

\section{Proof of Proposition~\ref{lem:esparza-comparison}}\label{app:esparza-comparison}

\begin{proof}
  We write $\mathit{marked}(Q,\vec{m})$ if $Q$ is marked
  in $\vec{m}$, and $\mathit{unmarked}(Q,\vec{m})$ otherwise.  We establish
  the contrapositive:  assuming that for all $\vec{x}\in \Qpos^T$
  some condition of Proposition~\ref{prop:esparza-conditions} does not
  hold, we show that for all $\vec{x}\in \Qpos^T$ some Condition~(i)
  and~(iii) of Proposition~\ref{prop:q-reachability} does not hold. To
  this end, let $\vec{x}\in \Qpos^T$, if Condition~(i) of
  Proposition~\ref{prop:esparza-conditions} does not hold, i.e.,
  $\vec{m} \neq \vec{m}_0 + \mat{C} \cdot \vec{x}$, then Condition~(i)
  of Proposition~\ref{prop:q-reachability} does not hold either. Thus,
  assume that $\vec{m} = \vec{m}_0 + \mat{C} \cdot \vec{x}$. We then
  have the following:
  \begin{align*}
    & \text{there is a trap $Q$ in $\net$ s.t.\
      $\mathit{marked}(Q,\vec{m}_0)$ and $\mathit{unmarked}(Q,\vec{m})$}\\ 
    \implies & \text{there is a siphon $Q$ in $\net^{-1}$ s.t.\
      $\mathit{marked}(Q,\vec{m}_0)$ and $\mathit{unmarked}(Q,\vec{m})$}\\
   \overset{(\ast)}{\implies} & \text{there is a siphon $Q'$ in $\net^{-1}_{T'}$ 
     s.t.\ $\mathit{unmarked}(Q',\vec{m})$, where $T' = \eval{\vec{x}}$}\\
    \implies & \eval{\vec{x}}\notin \fs(\net^{-1}, \vec{m}).
  \end{align*}
  In order to show the implication~$(\ast)$, we first observe that
  $\vec{m}[Q]\neq \vec{m}_0[Q]$, since $Q$ is marked in $\vec{m}_0$
  and not marked in $\vec{m}$. Let $T'\defeq \eval{\vec{x}}$ and
  $P'\defeq \prepost{T'}$, we claim that $Q'\defeq Q \cap P'$ is a
  siphon in $\net_{T'}$. Let $t\in T'$ and suppose that $t\in
  \pre{Q'}$.  Since $Q'\subseteq Q$, we have $t\in \pre{Q}$ and hence
  $t\in \post{Q}$, which yields $t\in \post{Q'}$ as all the neighbour
  places of $t$ belong to $P'$.  It remains to show that
  $Q'\neq\emptyset$. To the contrary, assume that $Q' = \emptyset$,
  then $\vec{x}(t)=0$ for every $t\in \prepost{Q}$ and hence
  $\vec{m}[Q] = \vec{m}_0[Q]$, a contradiction. \qed
\end{proof}

\vfill

\end{appendix}
}{}

\end{document}
